\newtheorem*{criteria}{Criterion}
\newtheorem*{lemma}{Lemma}
\newcommand{\Tr}{{\rm Tr}}
\begin{document}
\title{Construction of Non-Hermitian Parent Hamiltonian from Matrix Product States}
\author{Ruohan Shen}
\thanks{These authors contributed equally.}
\affiliation{State Key Laboratory of Low Dimensional Quantum Physics and Department of Physics, Tsinghua University, Beijing 100084, China}

\author{Yuchen Guo}
\thanks{These authors contributed equally.}
\affiliation{State Key Laboratory of Low Dimensional Quantum Physics and Department of Physics, Tsinghua University, Beijing 100084, China}
\author{Shuo Yang}
\email{shuoyang@tsinghua.edu.cn}
\affiliation{State Key Laboratory of Low Dimensional Quantum Physics and Department of Physics, Tsinghua University, Beijing 100084, China}
\affiliation{Frontier Science Center for Quantum Information, Beijing 100084, China}
\affiliation{Hefei National Laboratory, Hefei 230088, China}

\date{\today}

\begin{abstract}
    There are various research strategies used for non-Hermitian systems, which typically involve introducing non-Hermitian terms to pre-existing Hermitian Hamiltonians. 
    It can be challenging to directly design non-Hermitian many-body models that exhibit unique features not found in Hermitian systems.
    In this Letter, we propose a new method to construct non-Hermitian many-body systems by generalizing the parent Hamiltonian method into non-Hermitian regimes.
    This allows us to build a local Hamiltonian using given matrix product states as its left and right ground states. 
    We demonstrate this method by constructing a non-Hermitian spin-$1$ model from the asymmetric Affleck-Kennedy-Lieb-Tasaki (AKLT) state, which preserves both chiral order and symmetry-protected topological order.
    Our approach opens up a new paradigm for systematically constructing and studying non-Hermitian many-body systems, providing guiding principles to explore new properties and phenomena in non-Hermitian physics.
\end{abstract}
\maketitle

\emph{Introduction.\textemdash}
Non-Hermitian physics has attracted much attention both theoretically~\cite{Konotop2016, Leykam2017, Gong2018, Kawabata2019, Ashida2020, Bergholtz2021, Yamamoto2022, Ding2022, Guo2022, Chen2022} and experimentally~\cite{Guo2009, Schindler2011, Zeuner2015, El2018, Hokmabadi2019, Zhang2022A, Gu2022} for describing open systems~\cite{Vega2017}, such as photonics~\cite{Takata2018} and acoustics~\cite{Zhang2021A, Wen2022} with gain and loss, as well as quasi-particles in interacting or disordered systems~\cite{Kozii2017, Shen2018}.
It has also revealed non-trivial properties that have no Hermitian counterpart~\cite{Heiss1990, Yao2018, Kunst2018, Xiong2018, Matsumoto2020, Borgnia2020}.

However, many recent studies revealing non-trivial properties of non-Hermitian systems have focused on the single-particle picture~\cite{Hatano1996, Zeng2020, Esaki2011}.
One reason for this is that many powerful numerical methods for Hermitian quantum many-body models, such as density matrix renormalization group (DMRG)~\cite{White1992, Schollwoeck2005} and quantum Monte Carlo (QMC)~\cite{Gubernatis2016}, cannot be directly applied to non-Hermitian systems.
Some modified algorithms also suffer from unstable convergence~\cite{Chan2005, Rotureau2006} and incapability near exceptional points~\cite{Huang2011A, Huang2011B, Zhang2020}.

Therefore, it is interesting to consider the opposite question: can we construct a non-Hermitian Hamiltonian from a pair of easily engineered states that preserve desired properties, rather than having to extract various properties from a given Hamiltonian?
In Hermitian systems, this task can be achieved using the parent Hamiltonian method~\cite{Affleck1987, PerezGarcia2007}, which allows for the construction of a local, gapped Hamiltonian whose ground state is represented by a matrix product state (MPS)~\cite{Verstraete2008, Schollwock2011, Schuch2013, Orus2014, Cirac2021}.
However, this method cannot be directly applied to non-Hermitian systems.

In this Letter, we present a method for constructing non-Hermitian parent Hamiltonians (nH-PHs) by generalizing the conventional Hermitian approach.
We provide criteria for states that can be used to establish an nH-PH and derive the explicit form of the Hamiltonian.
As an example, we construct a non-Hermitian model from asymmetric AKLT states~\cite{Maekawa2022} and examine its physical properties in the thermodynamic limit using the generalized infinite time-evolving block decimation (iTEBD) method~\cite{Hastings2009, SeeSM}.
We find that the model has two non-trivial orders: chiral order detected by a local order parameter and symmetry-protected topological (SPT) order~\cite{Gu2009, Chen2010, Chen2013} detected by a string order parameter~\cite{PerezGarcia2008}.

\emph{Non-Hermitian Parent Hamiltonian.\textemdash}
The expectation value of any observable $\braket{\hat{O}}$ for a general non-Hermitian system can be evaluated in different ways~\cite{Brody2013, Lee2020, Grimaudo2020, Grimaldi2021}.
Here we choose the formalism discussed in~\cite{Brody2013} to calculate the expectation as
\begin{align}
    \braket{\hat{O}}_{LR} = \braket{L|\hat{O}|R}/\braket{L|R},
    \label{ObservableLR}
\end{align}
which has a clear geometric interpretation~\cite{Ju2019}.
Here $\ket{R}$ and $\ket{L}$ are the ground states of $H$ and $H^{\dagger}$ respectively, which are defined as the eigenstates with the lowest real parts of the eigenvalues.
As a result, many more novel properties emerge in non-Hermitian systems since we have more degrees of freedom in choosing $\bra{L}$ independent of $\ket{R}$ than in the Hermitian case, where expectation values are evaluated under $\braket{\hat{O}}_{RR} = \braket{R|\hat{O}|R}/\braket{R|R}$.
A natural question arises: can such a system be constructed, i.e., can we find a non-Hermitian Hamiltonian that has the given $\bra{L}$ and $\ket{R}$ as its corresponding ground states?
In the following, we answer this question for MPSs, which satisfy the entanglement area law and can describe ground states of one-dimensional (1D) local and gapped Hamiltonians~\cite{Hastings2007, Eisert2010}.
The same argument can be easily extended to higher dimensions.

\begin{figure}[tbp]
    \centering
    \includegraphics[width=1.0\columnwidth]{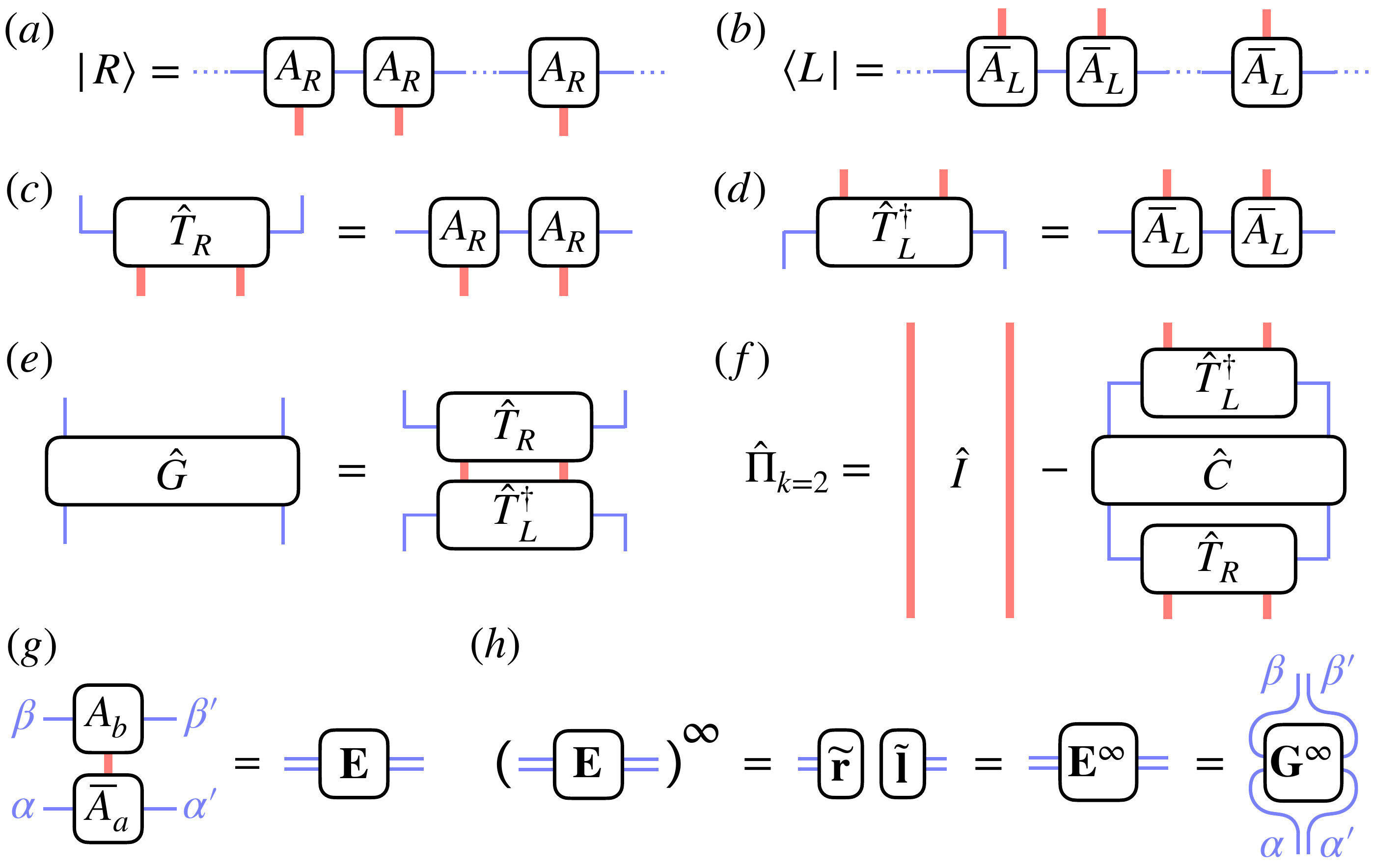}
    \caption{Construction of non-Hermitian parent Hamiltonian for $k=2$.
             (a-b) Translation-invariant MPS.
             (c-d) Local tensors $\hat{T}_R = \ket{\mathbf{p}} \mathbf{T}_R (\mathbf{r}|$ and $\hat{T}_L^{\dagger} = |\mathbf{l}) \mathbf{T}_L^{\dagger}\bra{\mathbf{p}}$.
             (e) The metric operator $\hat{G} = \hat{T}_L^{\dagger}\hat{T}_R$.
             (f) The local projector $\hat{\Pi} = \hat{I} - \hat{T}_R\hat{C}\hat{T}_L^{\dagger}$.
             (g) The transfer matrix $\mathbf{E}$ constructed from tensors $A_b$ and $A_a^{*}$.
             (h) The RG fixed point $\mathbf{E}^{\infty}$ and the corresponding $\mathbf{G}^{\infty}$.
    }
    \label{Projector}
\end{figure}

In this context, we consider 1D translation-invariant (TI) and injective MPS~\cite{PerezGarcia2007} $\ket{R}$ and $\ket{L}$ shown in Fig.~\ref{Projector}(a,b), written as 
\begin{align}
    \ket{R(L)} = \sum\limits_{i_1,\dots,i_N} \mathrm{Tr}\left[A_{R(L)}^{[i_1]}\dots A_{R(L)}^{[i_N]}\right] \ket{i_1,\dots,i_N}
\end{align}
with the same virtual bond dimension $D$ and physical bond dimension $d$.
A detailed explanation of MPS is shown in Supplemental Material~\cite{SeeSM}.
As shown in Fig.~\ref{Projector}(c,d), tensors after contracting $k$ neighboring sites can be regarded as maps from virtual to physical degrees of freedom, i.e.,
$
    \hat{T}_R = \ket{\mathbf{p}} \mathbf{T}_R (\mathbf{r}|\text{ and }
    \hat{T}_L^{\dagger} = |\mathbf{l}) \mathbf{T}_L^{\dagger} \bra{\mathbf{p}}.
$
Here, $\ket{\mathbf{p}}$ is the collective physical basis and $|\mathbf{r})$ ($(\mathbf{l}|$) is the collective virtual basis of $\ket{R}$ ($\bra{L}$).
$\mathbf{T}_R$ and $\mathbf{T}_L$ are coefficient matrices.
The local support spaces $\mathcal{H}_R$ and $\mathcal{H}_L$ are the images of $\hat{T}_R$ and $\hat{T}_L$, respectively.
With the injectivity condition, we can choose a large enough $k$ such that $d^k\geq D^2$ and $\dim{\mathcal{H}_R} = \dim{\mathcal{H}_L} = D^2$~\cite{PerezGarcia2007}.

We aim to find a local Hamiltonian in the form $\hat{H}= \sum_i \hat{\Pi_i}$, where each $\hat{\Pi_i}=\hat{I}-\hat{P}_i$ acts on $k$ local sites and ensures that $\bra{L}$ and $\ket{R}$ are zero-energy modes.
Here $\hat{P}_i$ is a projector with $\hat{P}_i^2=\hat{P}_i$.
In other words, we require $\hat{P}_i$ and $\hat{P}_i^{\dagger}$ to be projectors onto $\mathcal{H}_R$ and $\mathcal{H}_L$, respectively, 
\begin{align}
    \hat{P}_i\hat{T}_R = \hat{T}_R,\qquad \hat{T}_L^{\dagger}\hat{P}_i = \hat{T}_L^{\dagger}.\label{ProjectorCondition}
\end{align}
Meanwhile, we require ${\rm rank} \hat{P} = {\rm rank}\hat{P}^{\dagger} = D^2$.
Therefore, the most general form of a projector can be written as (the site index $i$ has been omitted for simplicity)
\begin{align}
    \hat{P} = \ket{\mathbf{p}} \mathbf{T}_R \mathbf{C} \mathbf{T}_L^{\dagger} \bra{\mathbf{p}} = \hat{T}_R\hat{C}\hat{T}_L^{\dagger},
    \label{GeneralFormOfProjector}
\end{align}
where $\mathbf{C}$ is a $D^2\times D^2$ matrix and $\hat{C} = |\mathbf{r}) \mathbf{C} (\mathbf{l}|$ is its operator form.
The matrix elements are determined by Eq.~\eqref{ProjectorCondition}, $\hat{T}_R\hat{C}\hat{T}_L^{\dagger}\hat{T}_R = \hat{T}_R $.
As $\mathrm{rank}\,\hat{T}_R = \mathrm{rank}\,\hat{T}_L^{\dagger} = D^2$, this gives 
\begin{align}
\hat{C} = (\hat{T}_L^{\dagger} \hat{T}_R)^{-1} \equiv \hat{G}^{-1}.
\end{align}
Therefore, the metric operator $\hat{G} = |\mathbf{l}) \mathbf{G} (\mathbf{r}|$ shown in Fig.~\ref{Projector}(e) must be invertible and can fully determine the $k$-local Hamiltonian shown in Fig.~\ref{Projector}(f)
\begin{align}
    \hat{\Pi} = \hat{I} - \hat{T}_R\hat{C}\hat{T}_L^{\dagger} 
    = \ket{\mathbf{p}}\mathbf{I}\bra{\mathbf{p}} - \ket{\mathbf{p}} \mathbf{T}_R \mathbf{G}^{-1} \mathbf{T}_L^{\dagger} \bra{\mathbf{p}}.\label{ProjectorFinal}
\end{align}
In Fig.~S1 in Supplemental Material~\cite{SeeSM}, we verify Eq.~\eqref{ProjectorCondition} in a straightforward way.

We note that $\hat{G}^{-1}$ is simply the operator used to bi-orthogonalize $\mathcal{H}_R$ and $\mathcal{H}_L$.
In other words, if we perform the transformations $\mathbf{T}_R\rightarrow \mathbf{T}_R\mathbf{G}^{-1}$ and $\mathbf{T}_L\rightarrow\mathbf{T}_L$, $\hat{T}_L$ and $\hat{T}_R$ become orthogonal operators
\begin{align}
    \hat{T}_L^{'\dagger} \hat{T}_R^{'} = |\mathbf{l}) \mathbf{T}_L^{\dagger} \mathbf{T}_R \mathbf{G}^{-1} (\mathbf{r}| = \hat{I}.
\end{align}
On the other hand, the ability to perform bi-orthogonalization guarantees the existence of nH-PH, as proved in Supplemental Material~\cite{SeeSM}.
As a specific example, when we set $\hat{T}_L = \hat{T}_R$, our method reproduces the conventional Hermitian projector, but with a clearer physical interpretation.

By referencing the proof in Ref.~\cite{PerezGarcia2007}, we see that the given right (left) state is guaranteed to be the unique zero-energy eigenstate of $\hat{H}$ ($\hat{H}^{\dagger}$) by construction.
However, the zero mode is not necessarily the ground state due to non-Hermiticity.
Specifically, we have $\braket{\hat{H}} = \bra{\psi}\sum_i \hat{h}_i\ket{\psi} = \sum_i \bra{\psi}\hat{h}_i \ket{\psi} \geq \sum_i E_{0,i}$ for a Hermitian parent Hamiltonian, where $E_{0, i}$ is the ground state energy of the local term $\hat{h}_i$.
Thus, the total system energy is bound by the local ground-state energies.
However, this inequality no longer holds in the non-Hermitian regime. 
$\Re({\bra{\psi} \hat{h}_i \ket{\psi}})$ can be even smaller than $\Re{(E_{0, i})}$ (which equals to $0$ in our nH-PH), implying the existence of a negative energy eigenstate.
As a consequence, the bound on the total energy disappears, and the common ground state of local projectors is not necessarily the global ground state.
This phenomenon often occurs when non-Hermitian effects are significant but can be reduced by increasing the interaction length $k$, as demonstrated in the following example.

\emph{$\mathcal{PT}$-symmetry.\textemdash}
In the following, we consider non-Hermitian systems with $\mathcal{PT}$-symmetry.
These systems are particularly noteworthy because their spectra only contain real numbers or conjugate pairs~\cite{Bender1998, Bender2002, Konotop2016}, and they can be easily implemented and maintained in our nH-PH by designing $\ket{R}$ and $\bra{L}$.

We construct a pesudo-Hermitian Hamiltonian~\cite{Ashida2020} satisfying that $\hat{\mathcal{P}} \hat{H}^{\dagger} \hat{\mathcal{P}}^{-1}=\hat{H}$ and $\hat{\mathcal{T}}\hat{H}^{\dagger}\hat{\mathcal{T}}^{-1}=\hat{H}$, where $\hat{\mathcal{P}}$ and $\hat{\mathcal{T}}=e^{i\pi \hat{S}_y}\hat{K}$ are the parity symmetry and time-reversal symmetry operators, respectively.
This results in the $\mathcal{PT}$ joint symmetry $\hat{\mathcal{P}}\hat{\mathcal{T}} \hat{H} (\hat{\mathcal{P}}\hat{\mathcal{T}})^{-1}=\hat{H}$.
Meanwhile, the above condition requires that the ground state of $H$ and $H^{\dagger}$ be connected by similar transformations $\hat{\mathcal{P}}$ or $\hat{\mathcal{T}}$.
To construct such a non-Hermitian Hamiltonian, we need a TI MPS that does not preserve $\hat{\mathcal{P}}$ or $\hat{\mathcal{T}}$ symmetry itself, but satisfies the joint symmetry condition $\sum_j (e^{-i\pi \hat{S}_y})_{i,j} (A^{[j]})^{*} \propto M^{-1}(A^{[i]})^{\rm T}M$.
Here $\hat{\mathcal{P}}$ is realized by exchanging two virtual indices for a TI MPS, and $M$ is an arbitrary gauge on the virtual indices of the right ground state $\ket{R}$.
The left ground state is chosen as $\ket{L}=\hat{\mathcal{P}}\ket{R}$, whose tensors are given by ${A^{\prime[i]}} = (A^{[i]})^{\rm T}$.

\emph{Asymmetric AKLT model\textemdash}
We use the asymmetric AKLT state as the right ground state $\ket{R} = \ket{\Phi_\mu}$~\cite{Maekawa2022}, which satisfies the aforementioned conditions. 
This state can be represented by an MPS with the following non-zero elements
\begin{align}\begin{aligned}
    &A_{\mu,\downarrow \uparrow}^{[1]} = -\sqrt{\mu},\,
    &&A_{\mu,\uparrow \downarrow}^{[-1]} = \sqrt{\mu}, \\
    &A_{\mu,\uparrow \uparrow}^{[0]} = 1/\sqrt{2},\,
    &&A_{\mu,\downarrow \downarrow}^{[0]} = -\mu/\sqrt{2}.
    \label{ComponentsAmu}
\end{aligned}\end{align}
Its entanglement structure is similar to that of the AKLT state, with an asymmetric underlying valence bond $\ket{\uparrow\downarrow}-\mu\ket{\downarrow\uparrow}$ tending toward one side. 
It is worth noting that $\ket{L} \propto \ket{\Phi_{1/\mu}}$ since their local tensors are related by a gauge transformation on virtual indices $(A_{\mu}^{[i]})^{\rm T} = - \mu \hat{\sigma}_y A_{1/\mu}^{[i]} \hat{\sigma}_y$~\cite{SeeSM}.

We first focus on the region $\mu\in[0, 1]$ for simplicity, and will reveal the reason later.
To calculate the expectation value of any observable $\braket{O}_{LR}$ in the thermodynamic limit, we need to evaluate the composed transfer matrix~\cite{Schollwock2011, Schuch2013} defined in Fig.~\ref{Projector}(g) using $A_{b}=A_\mu$ and $A_{a}=A_{\mu}^{\rm T}$.
On the basis $\{\uparrow \uparrow, \uparrow \downarrow, \downarrow \uparrow, \downarrow \downarrow \}$, we obtain
\begin{align}
    \mathbf{E}_{\mu,(\alpha \beta, \alpha' \beta')}
    = \left(\frac{1}{2}\right) \oplus
    \left(
        \begin{array}{cc}
            -\frac{\mu}{2} & \mu\\
            \mu & -\frac{\mu}{2}
        \end{array}
    \right)
    \oplus \left(\frac{\mu^2}{2}\right),
    \label{TransferMatrix}
\end{align}
whose eigenvalues are $\{\frac{1}{2}, -\frac{3\mu}{2}, \frac{\mu}{2}, \frac{\mu ^2}{2} \}$~\cite{SeeSM}.
At $\mu = \frac{1}{3}$, there is a `level crossing' transition for the dominant eigenvector of $\mathbf{E}_{\mu}$.
We will study this transition from the renormalization group (RG) perspective and conclude that it is a unique phenomenon that can only occur in non-Hermitian systems.

Implementation of RG aims to remove short-range entanglement and study long-range patterns.
This can be achieved from the fixed point of $\mathbf{E}$ via grouping infinite local tensors~\cite{Chen2010}, i.e., $\mathbf{E}^{\infty}=\mathrm{lim}_{k\rightarrow \infty} (\mathbf{E}/\lambda)^k$  with $\lambda$ being the dominant eigenvalue.
When $\mu>\frac{1}{3}$, the fixed point transfer matrix~\cite{SeeSM}
\begin{align}
    \mathbf{E}_{\mu,(\alpha \beta, \alpha' \beta')}^{\infty}(\mu>\frac{1}{3})
    = \frac{1}{2} \left(0\right) \oplus
    \left(
        \begin{array}{cc}
            1 & -1\\
            -1 & 1
        \end{array}
    \right)
    \oplus \left(0\right),
\end{align}
is the same as that of the conventional AKLT state up to a gauge, indicating that the non-Hermitian system is in the same AKLT phase for $\mu>\frac{1}{3}$.
On the contrary, $\mathbf{E}_{(\alpha \beta, \alpha' \beta')}^{\infty}(\mu<\frac{1}{3}) = \rm{Diag}\{1, 0, 0, 0\}$ is equivalent to a transfer matrix constructed from two product states, where any local observable would have a trivial expectation value.
Therefore, there is a quantum phase transition from the AKLT phase to the trivial phase at $\mu_{c} = \frac{1}{3}$, which can be detected by a chiral order parameter that will be introduced later.

At the same time, the corresponding metric matrix $\mathbf{G}_{(\alpha \alpha', \beta \beta')}^{\infty}(\mu<\frac{1}{3}) = \rm{Diag}\{1, 0, 0, 0\}$ shown in Fig.~\ref{Projector}(h) is not invertible, implying that the ability to bi-orthogonalize the local Hilbert spaces $\mathcal{H}_R^{\infty}$ and $\mathcal{H}_L^{\infty}$ will be destroyed during the RG process.
Thus, it is impossible to create a projector-form nH-PH for $k\rightarrow\infty$, even if $\mathbf{G}$ is invertible for finite $k$.
In summary, this new kind of phase transition without a Hermitian counterpart originates from the mismatch between the left and right ground states at the RG fixed point.

\emph{Chiral order and SPT order.\textemdash}
The asymmetric underlying valence bonds $\ket{\uparrow\downarrow}-\mu\ket{\downarrow\uparrow}$ and $-\mu\ket{\uparrow\downarrow}+\ket{\downarrow\uparrow}$ tend in opposite directions in $\ket{R}$ and $\ket{L}$, implying an interesting chiral property.
To detect this chiral order, two non-Hermitian order parameters are introduced $\hat{O}_{\rm left}=\frac{1}{2}\hat{S}_i^{+}\hat{S}_{i+1}^{-}$ and $\hat{O}_{\rm right}=\frac{1}{2}\hat{S}_i^{-}\hat{S}_{i+1}^{+}$. The chiral order parameter is then defined as $\hat{O}_{\rm chiral} = \hat{O}_{\rm right}-\hat{O}_{\rm left}$.
As a comparison, we also consider $\hat{O}_{\rm AF} = \hat{S}_i^{z}\hat{S}_{i+1}^{z}$, which is commonly adopted to detect the conventional anti-ferromagnetic order~\cite{SeeSM}.
The results for the non-Hermitian case are shown in Fig.~\ref{ChiralOrderParameter}(a).
For $\frac{1}{3}<\mu<3$, we obtain
\begin{align}
    \braket{\hat{O}_{\rm AF}} = -\frac{4}{9},\,
    \braket{\hat{O}_{\rm left}} = -\frac{4\mu}{9},\,
    \braket{\hat{O}_{\rm right}} = -\frac{4}{9\mu}.
\end{align}
At the AKLT point $\mu=1$, the state is isotropic with $\braket{\hat{O}_{\rm AF}}=\braket{\hat{O}_{\rm left}}=\braket{\hat{O}_{\rm right}}$.
Meanwhile, ${\rm sgn}\braket{\hat{O}_{\rm chiral}}$ changes when $\mu$ passes by $1$, demonstrating the chiral property of different directions.
In contrast, if we choose $\ket{L} = \ket{R}$ in Fig.~\ref{ChiralOrderParameter}(b), $\braket{\hat{O}_{\rm chiral}} = 0$ for all values of $\mu$.
This is because the chiral order parameter $\hat{O}_{\rm chiral}$ is anti-Hermitian, meaning that $\Re({\bra{\psi}\hat{O}_{\rm chiral}\ket{\psi}})=0$ for any $\ket{\psi}$.
As a result, such non-trivial chiral order cannot be realized in Hermitian systems.

\begin{figure}[tbp]
    \centering
    \includegraphics[width=\linewidth]{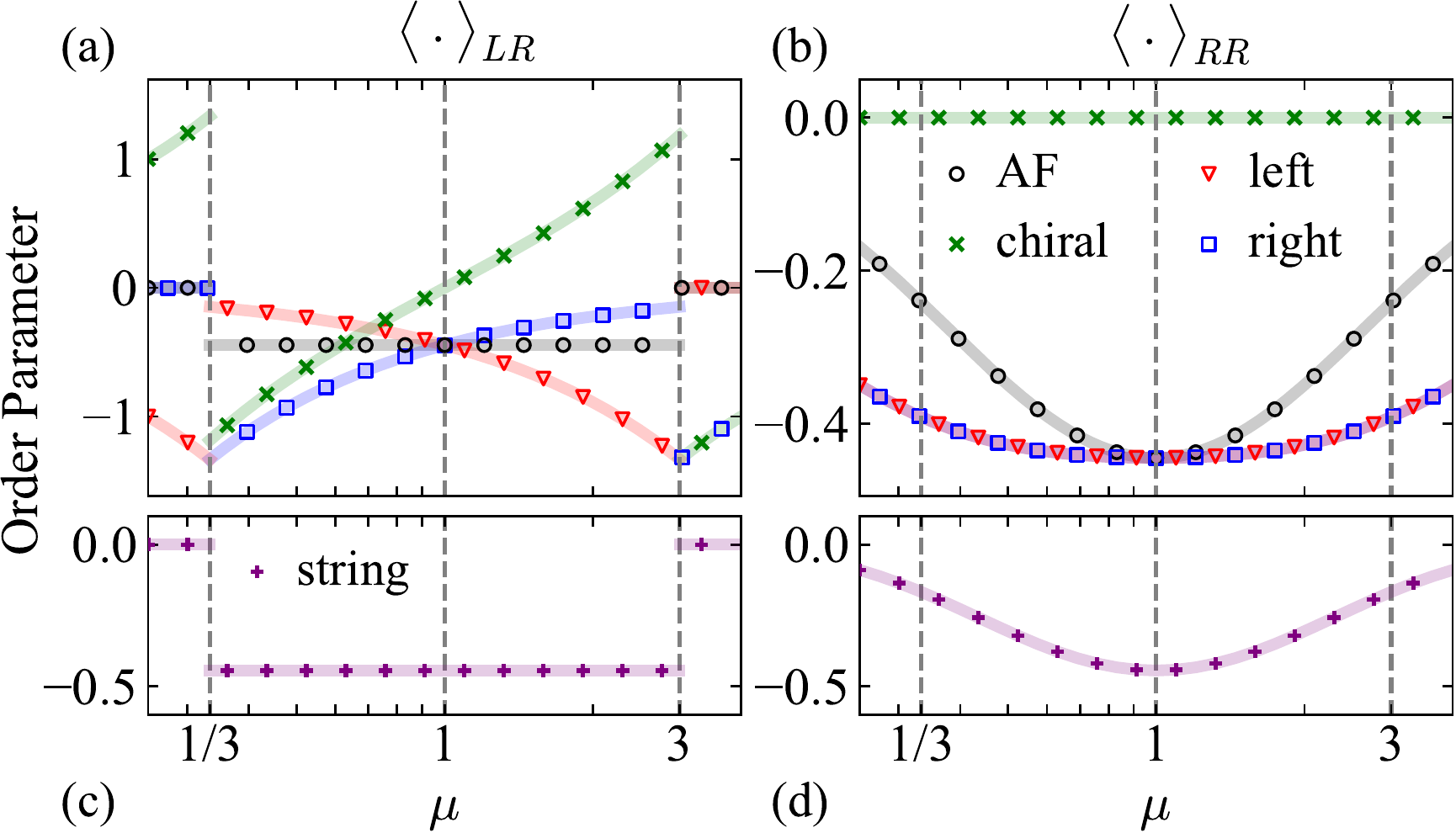}
    \caption{Order parameters evaluated under different $\mu$.
    The $x$-axis is presented in a log scale.
    (a)(c) The expectation values of chiral and string order parameters for non-Hermitian systems.
    (b)(d) The same for Hermitian systems.
    }
    \label{ChiralOrderParameter}
\end{figure}

In addition, there is a duality $\mu\sim \frac{1}{\mu}$ for $H\sim H^{\dagger}$, which is induced by the parity operation, i.e., $\ket{\uparrow\downarrow}-\mu\ket{\downarrow\uparrow} \rightarrow -\mu\ket{\uparrow\downarrow}+\ket{\downarrow\uparrow} = -\mu(\ket{\uparrow\downarrow}-\frac{1}{\mu}\ket{\downarrow\uparrow})$.
Since $H$ and $H^{\dagger}$ share the same energy spectrum, this relation directly gives the isotropic point $\mu=1$ and explains why the transitions from non-trivial to trivial systems occur in pairs at $\mu=\frac{1}{3}$ and $\mu=3$.
For the same reason, the chiral order parameter in Fig.~\ref{ChiralOrderParameter}(a) is centrosymmetric.

Our system also exhibits non-trivial SPT order.
We use $\hat{O}_{\rm string}(i, j)=\hat{S}_i^z(\prod_{k=i+1}^{j-1}e^{i\pi \hat{S}_k^z})\hat{S}_j^z$, which was previously adopted for the conventional AKLT state~\cite{Nijs1989,PerezGarcia2008}, to detect the SPT order in our non-Hermitian system.
Its expectation value can be calculated analytically in the thermodynamic limit~\cite{Maekawa2022, SeeSM}, and the result is shown in Fig.~\ref{ChiralOrderParameter}(c).
For $\frac{1}{3}<\mu<3$, the system preserves perfect non-decaying string order $\braket{\hat{O}_{\rm string}}=-\frac{4}{9}$ for any string length, indicating that it is in the same SPT phase as the conventional AKLT model.
Nevertheless, the string order vanishes for $\mu<\frac{1}{3}$ and $\mu>3$, showing that it is similar to a trivial product state.
This is consistent with previous discussions.
In contrast, for the Hermitian system shown in Fig.~\ref{ChiralOrderParameter}(d), the string order parameter also saturates to a non-zero value for all $\mu$~\cite{SeeSM}, but the value becomes smaller as $\mu$ deviates from the AKLT point.

\emph{Parent Hamiltonian.\textemdash}
Here we explicitly construct a TI non-Hermitian Hamiltonian to realize the aforementioned chiral and SPT orders with $k=2$, i.e., with only nearest-neighbor interactions~\cite{SeeSM}
\begin{align}
    &\hat{\Pi}(\mu)_{i} 
    = \frac{5}{12}\left(\frac{\mu}{2} \hat{S}_{i}^{-}\hat{S}_{i+1}^{+} + \frac{1}{2\mu}\hat{S}_{i}^{+}\hat{S}_{i+1}^{-} + \hat{S}_{i}^{z}\hat{S}_{i+1}^{z}\right)+\frac{2}{3} \nonumber \\
    &+ \frac{1}{6}\left(\frac{\mu^2}{4}{\hat{S}_{i}^{-2}}{\hat{S}_{i+1}^{+2}}+\frac{1}{4\mu^2}{\hat{S}_{i}^{+2}}{\hat{S}_{i+1}^{-2}}-{\hat{S}_{i}^{z2}}-{\hat{S}_{i+1}^{z2}}\right) \nonumber \\
    &+ \frac{1}{24}\left(\mu \hat{S}_{i}^{-z}\hat{S}_{i+1}^{+z} + \frac{1}{\mu}\hat{S}_{i}^{+z}\hat{S}_{i+1}^{-z}\right)
    +\frac{1}{4}{\hat{S}_{i}^{z2}}{\hat{S}_{i+1}^{z2}},
    \label{k=2nHPH}
\end{align}
where $\hat{S}^{\pm z}=\hat{S}^{\pm}\hat{S}^z+\hat{S}^z\hat{S}^{\pm}$. 
It is obvious that $\hat{H}(\mu)=\sum_i \hat{\Pi}(\mu)_i$ does not preserve either $\hat{\mathcal{P}}$ (exchanging site $i$ and $i+1$) or $\hat{\mathcal{T}}$ ($\hat{S}^z\rightarrow-\hat{S}^z$, $\hat{S}^+\rightarrow-\hat{S}^-$, $\hat{S}^-\rightarrow-\hat{S}^+$) individually, but remains unchanged when $\hat{\mathcal{P}}$ and $\hat{\mathcal{T}}$ are combined.

We use exact diagonalization (ED) to investigate the energy spectrum for small systems and find that the spectrum for open boundary condition (OBC) is identical for all $\mu>0$~\cite{SeeSM}, with four-fold degenerate ground states as a characteristic property of SPT~\cite{Gu2009}.
We also calculate the spectrum under periodic boundary condition (PBC) and show that the Hamiltonian remains gapped for a wide range of $\mu$ via finite-size scaling to $N \rightarrow \infty$, as shown in Fig.~S5 in Supplemental Material~\cite{SeeSM}.

According to previous sections, the phase transitions occur at $\mu_{\rm c}=\frac{1}{3}$ and $\mu_{\rm c}=3$ for $k\rightarrow \infty$.
In this case, eigenvalues with negative real parts will not appear, and the invertibility of $\mathbf{G}^{\infty}$ is equivalent to the existence of an nH-PH with $\ket{\Phi_{\mu}}$ as its ground state.
When using finite $k$, the nH-PH $\hat{H}_{k}\left(\mu\right)$ is still well-defined, even when $\mu<\frac{1}{3}$ and $\mu>3$, but it does not have $\ket{\Phi_{\mu}}$ as its unique ground state in these regions.
Furthermore, the construction of nH-PH may have unfavorable consequences, such as level-crossing caused by the non-commutability of local projectors and non-Hermiticity, which shifts the critical points towards the intermediate phase for finite $k$. 

To detect phase transitions, we generalize the modified iTEBD method~\cite{Hastings2009} to analyze Hamiltonians with multi-site interactions~\cite{SeeSM}.
We find that $k=2$ is sufficient to identify chiral and string orders for a wide range of $\mu$ in the intermediate phase.
We evaluate the infidelity between the output state from iTEBD $\ket{\Psi_{\mu}}$ with $D = 12$ and the given asymmetric AKLT state $\ket{\Phi_{\mu}}$, which is defined as $\eta=1-\lim_{N\rightarrow\infty}\left|\braket{\Phi_{\mu}|\Psi_{\mu}}\right|^{1/N} = 1-|\lambda_{\Phi\Psi}|$ with normalization condition $\braket{\Phi_\mu | \Phi_\mu}=1$ and $\braket{\Psi_\mu | \Psi_\mu}=1$.
The results shown in Fig.~\ref{iTEBDresult}(a-b) indicate that the asymmetric AKLT state $\ket{\Phi_{\mu}}$ is indeed the ground state in the intermediate phase, but not for extreme values of $\mu$ near the regions $\mu<\frac{1}{3}$ and $\mu>3$, although it is always a zero mode by construction.
As we increase $k$, the critical point will converge to $\mu_{\rm c}$.
Using $k=3$ allows us to expand the region of nH-PH and brings the critical points much closer to $\mu_{\rm c}$.

\begin{figure}[tbp]
    \centering
    \includegraphics[width=1\linewidth]{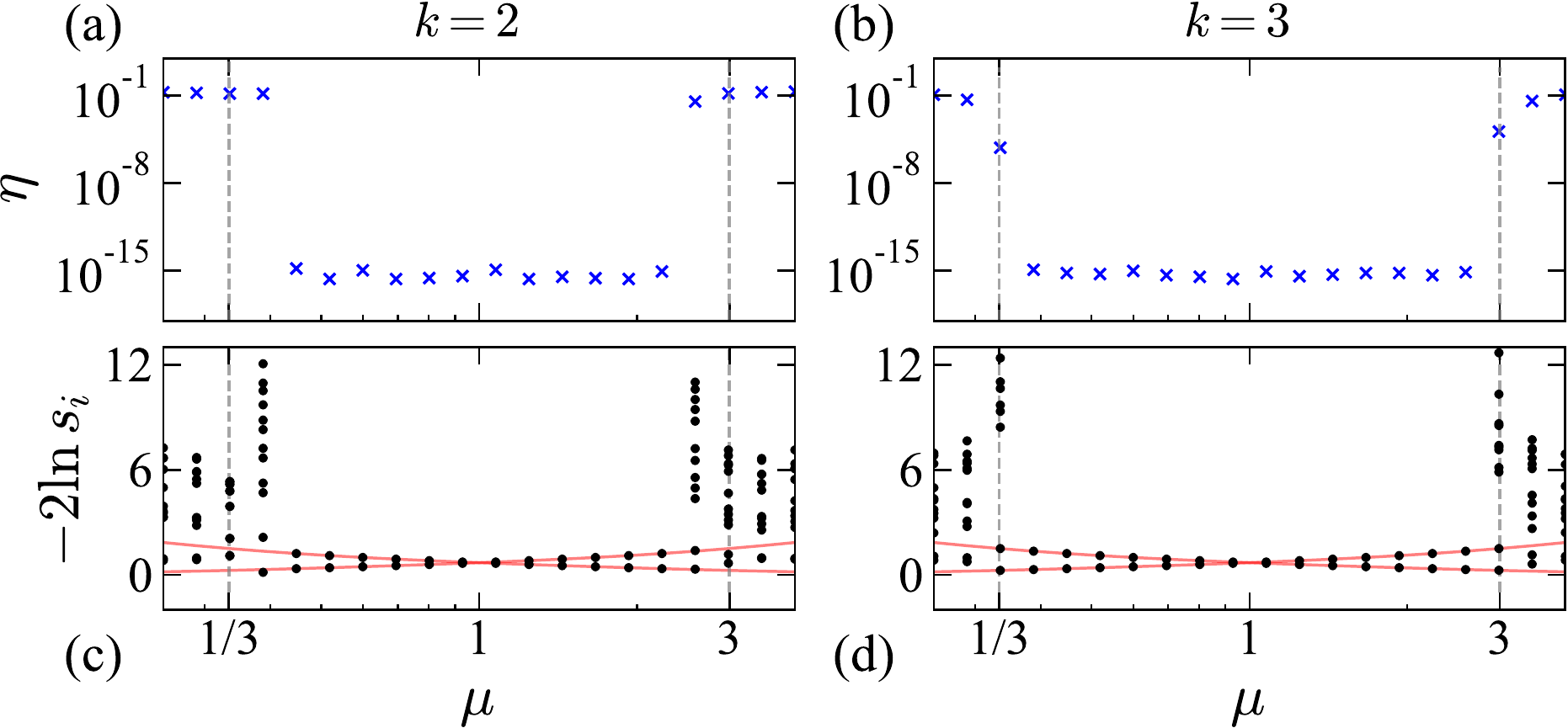}
    \caption{Calculated ground state $\ket{\Psi_{\mu}}$ of $H_k(\mu)$ using the multi-site iTEBD method with $D = 12$ for $k = 2$ and $k = 3$.
    (a-b) Infidelity between $\ket{\Psi_{\mu}}$ and $\ket{\Phi_{\mu}}$.
    (c-d) Entanglement spectrum of $\ket{\Psi_{\mu}}$ (black dots) and $\ket{\Phi_{\mu}}$ (red lines).
    }
    \label{iTEBDresult}
\end{figure}

Moreover, we investigate the entanglement spectrum of $\ket{\Psi_{\mu}}$ in Fig.~\ref{iTEBDresult}(c-d).
In the intermediate phase, the ground state $\ket{\Psi_{\mu}}$ has only two non-zero elements in the entanglement spectrum, consistent with that of $\ket{\Phi_{\mu}}$ shown in red curves. 
On the contrary, for extreme $\mu$, the algorithm cannot converge to a unique ground state and the entanglement spectrum is gapless.
Numerical simulations for $H_k^{\dagger}(\mu)$ whose ground state is expected to be $\ket{\Phi_{1/\mu}}$ are shown in Fig.~S7 in Supplemental Material~\cite{SeeSM}, where we obtain consistent results.

\emph{Conclusion.\textemdash}
In this Letter, we propose a general scheme to construct a non-Hermitian Hamiltonian from two different MPS $\bra{L}$ and $\ket{R}$ as left and right ground states.
As an example, we demonstrate how to create a non-Hermitian model from asymmetric AKLT states that perserves both chiral and SPT orders, and identify a phase transition with a new origin without a Hermitian counterpart.

Our approach changes the paradigm of non-Hermitian physics, from top-down to bottom-up.
We can now construct Hamiltonians with short-range interactions from states that preserve desired properties rather than extracting information from a given Hamiltonian.
Compared to the conventional Hermitian parent Hamiltonian, our method offers more possibilities as there are extra degrees of freedom in choosing two states instead of one.
It also establishes a duality between quantum states and Hamiltonians, liberates researchers from the constraints of specific systems, and provides a new perspective to study strongly correlated quantum many-body systems.
We believe that there is a broader world in strongly-correlated many-body systems in the non-Hermitian regime.

\begin{acknowledgments}
    We thank Ze-An Xu and Yanzhen Wang for helpful discussions.
    This work is supported by the National Natural Science Foundation of China (NSFC) (Grant No. 12174214 and No. 92065205), the National Key R\&D Program of China (Grant No. 2018YFA0306504), the Innovation Program for Quantum Science and Technology (Grant No. 2021ZD0302100), and the Tsinghua University Initiative Scientific Research Program.
\end{acknowledgments}

\bibliography{ref}
\newpage
\appendix
\onecolumngrid
\renewcommand{\thesection}{S-\arabic{section}} \renewcommand{\theequation}{S%
\arabic{equation}} \setcounter{equation}{0} \renewcommand{\thefigure}{S%
\arabic{figure}} \setcounter{figure}{0}
\section*{Supplemental Material for `Construction of Non-Hermitian Parent Hamiltonian from Matrix Product States'}

In this Supplemental Material, we provide more details on matrix product states (MPSs), the verification of the projector, criteria for the existence of nH-PH, the transfer matrix and the metric matrix at the fixed point, calculation of string order, analytical construction of nH-PH for $k=2$, energy spectrum under PBC, and the multi-site iTEBD algorithm.

\section{Introduction on Matrix Product State (MPS)}
Matrix Product State (MPS)~\cite{PerezGarcia2007, Orus2014, Cirac2021}, a member of the tensor network family, has several interesting properties that make them useful in various areas of physics research.
For example, it has been proved that ground states of local, gapped Hamiltonians in 1D spin systems can be efficiently represented by MPS~\cite{Hastings2007}, where the correlation between distant parts of the system decreases exponentially with distance.
These properties make physicists capable of studying interesting phenomena and constructing novel quantum phases in a unified manner~\cite{Chen2010}.
In an MPS representation, global entanglement structure is realized by designing and constructing local tensors, which also facilitates the numerical studies of quantum many-body systems.

For a translation-invariant quantum state, the MPS wavefunction, as shown in Fig.~1(a,b) in the main text, can be written as:
\begin{equation}
    \ket{\Psi} = \sum_{i_1,\dots,i_N} \mathrm{Tr} \left[A^{[i_1]} \dots A^{[i_N]}\right] \ket{i_1,\dots,i_N}
\end{equation}

In this representation, $A^{[i_j]}$ denotes a $D\times D$ matrix that encodes the local degrees of freedom at site $j$ in a one-dimensional chain, where $i_j$ corresponds to the relevant physical indices. 
The dimension of the virtual bond is $D$, which is conventionally denoted as bond dimension.
The trace operation Tr is taken over the virtual indices of the tensors, which connect the first and the last sites of the chain.

In summary, MPS is a powerful mathematical framework that captures the essential features of quantum states in one-dimensional systems. Their connection to entanglement structure makes them a valuable tool for understanding many-body physics, while their computational efficiency makes them attractive for quantum simulation and computation.

\section{Verification of the projector}
Here we diagrammatically verify the projector constructed in Eq.~(6) satisfies the condition Eq.~(3) in the main text for $k = 2$, as shown in Fig.~\ref{VerifyProjector}. 
Projectors for finite $k$ can be verified similarly.

\begin{figure}[htbp]
    \centering
    \includegraphics[width=0.85\columnwidth]{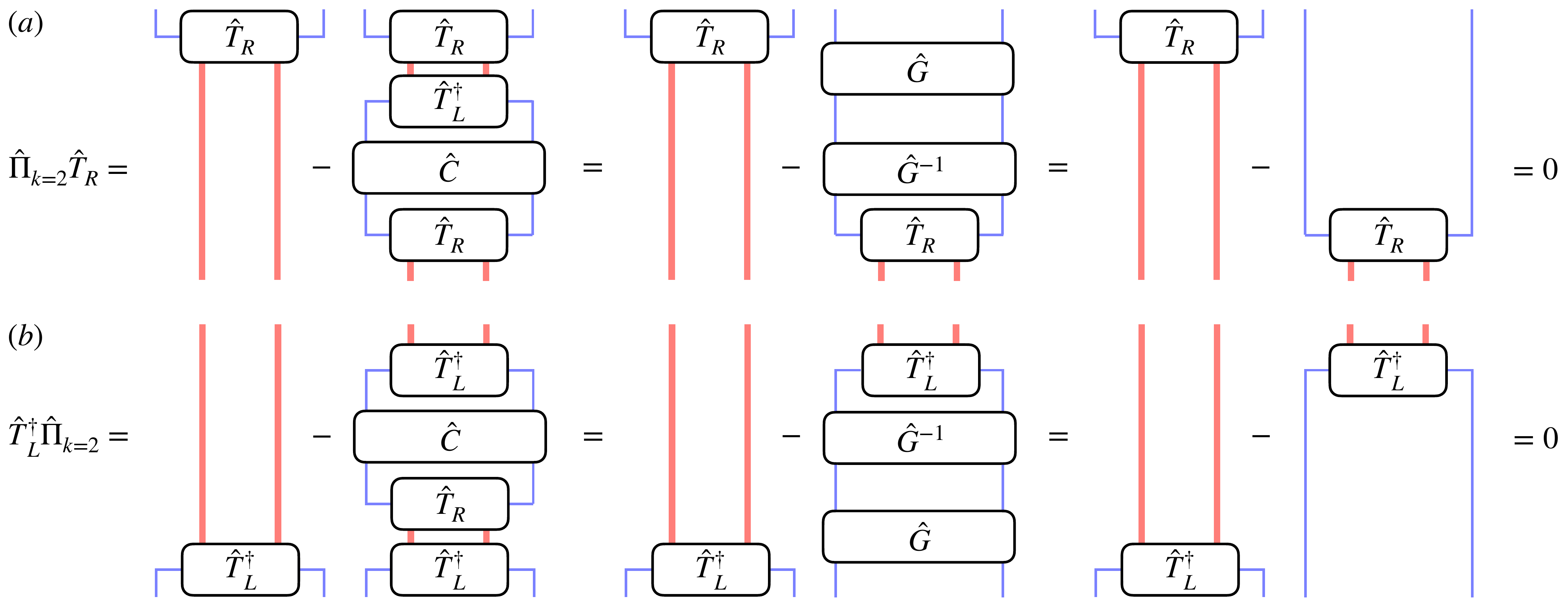}
    \caption{Diagrammatic verification of Eq.~(3) in the main text for $k = 2$.}
    \label{VerifyProjector}
\end{figure}

\section{Criteria for the existence of nH-PH}
In the main text, we have proved that the existence of nH-PH such that the given MPS serves as a zero mode is equivalent to the invertibility of the metric matrix $\mathbf{G}$.
Here we derive another two equivalent criteria, from the perspectives of bi-orthogonalization and direct sum of linear spaces, respectively.

\subsection{Criterion 2}
A central idea in non-Hermitian physics is bi-orthogonalization, which refers to the situation where the left and right eigenstates of a Hamiltonian do not form an orthogonal basis themselves but are orthogonal to each other.
We extend this idea to local Hilbert spaces $\mathcal{H}_R$ and $\mathcal{H}_L$ and show that the ability to find a pair of bi-orthogonal bases on them is equivalent to the ability to construct an nH-PH.
If bi-orthogonalization can be achieved, then there exist invertible transformations $\hat{T}_R^{'} = \ket{\mathbf{p}} \mathbf{T}_R \mathbf{U}_R (\mathbf{r}|$ and $\hat{T}_L^{'} = \ket{\mathbf{p}} \mathbf{T}_L \mathbf{U}_L (\mathbf{l}|$ that satisfy
\begin{align}
    \hat{G}^{'} = \hat{T}_L^{'\dagger} \hat{T}_R^{'} = |\mathbf{l}) \mathbf{U}_L^{\dagger} \mathbf{G} \mathbf{U}_R (\mathbf{r}| = \hat{I}
    \label{Criterion2}
\end{align}
which indicates that $\mathbf{G}$ is invertible.
Conversely, if $\mathbf{G}$ is invertible, we can simply choose $\mathbf{U}_R=\mathbf{G}^{-1}$ and $\mathbf{U}_R=\mathbf{I}$ to satisfy the same relation.
Therefore, the ability to bi-orthogonalize the local Hilbert spaces is equivalent to the ability to construct an nH-PH on them.
This criterion can also be expressed in a basis-independent way.

\subsection{Criterion 3}
Since $\hat{P}$ is not Hermitian in general, its eigenvectors are no longer orthogonal to each other.
To restore the orthogonality, we provide the following lemma and criterion.
\begin{lemma}
    Denote the local Hilbert space of $k$ contracted tensors with physical dimension $d$ as $\mathcal{H}=\mathcal{H}_d^{\otimes k}$.
    For any projector $\hat{P}: \mathcal{H}\rightarrow \mathcal{H}$, $\mathrm{ker}\,\hat{P}^{\dagger} = (\mathrm{im}\,\hat{P})^{\perp}$ and $\mathrm{ker}\,\hat{P} = (\mathrm{im}\,\hat{P}^{\dagger})^{\perp}$.
\end{lemma}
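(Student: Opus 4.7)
The plan is to observe that the stated equalities are in fact a general feature of the adjoint on any finite-dimensional Hilbert space and do not require the projector hypothesis $\hat{P}^2=\hat{P}$ at all; the only tool needed is the defining relation of the adjoint, $\langle v,\hat{P}w\rangle=\langle \hat{P}^{\dagger}v,w\rangle$ for all $v,w\in\mathcal{H}$. I would therefore prove the lemma as a direct consequence of this identity via a double-inclusion argument, and then obtain the second equality from the first by the involution $(\hat{P}^{\dagger})^{\dagger}=\hat{P}$.

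First I would establish $\ker\hat{P}^{\dagger}\subseteq(\mathrm{im}\,\hat{P})^{\perp}$: pick $v\in\ker\hat{P}^{\dagger}$ and an arbitrary $u=\hat{P}w\in\mathrm{im}\,\hat{P}$; then
\begin{equation*}
\langle v,u\rangle=\langle v,\hat{P}w\rangle=\langle \hat{P}^{\dagger}v,w\rangle=\langle 0,w\rangle=0,
\end{equation*}
so $v$ is orthogonal to every element of $\mathrm{im}\,\hat{P}$. Next I would prove the reverse inclusion $(\mathrm{im}\,\hat{P})^{\perp}\subseteq\ker\hat{P}^{\dagger}$: for $v\in(\mathrm{im}\,\hat{P})^{\perp}$ and an arbitrary $w\in\mathcal{H}$,
\begin{equation*}
\langle \hat{P}^{\dagger}v,w\rangle=\langle v,\hat{P}w\rangle=0,
\end{equation*}
and since this holds for every $w$, non-degeneracy of the inner product forces $\hat{P}^{\dagger}v=0$. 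Together these two inclusions yield the first equality $\ker\hat{P}^{\dagger}=(\mathrm{im}\,\hat{P})^{\perp}$.

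For the second equality I would simply substitute $\hat{P}\mapsto\hat{P}^{\dagger}$ in the identity just proved, observing that $(\hat{P}^{\dagger})^{\dagger}=\hat{P}$ on the finite-dimensional Hilbert space $\mathcal{H}=\mathcal{H}_{d}^{\otimes k}$; this immediately gives $\ker\hat{P}=(\mathrm{im}\,\hat{P}^{\dagger})^{\perp}$.

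There is essentially no obstacle: the statement is a textbook fact about adjoints, and the projector hypothesis is only relevant for the \emph{use} of the lemma later in the paper (e.g., to decompose $\mathcal{H}=\mathrm{im}\,\hat{P}\oplus\ker\hat{P}^{\dagger}$ when constructing the nH-PH criterion), not for its proof. The only subtle point worth flagging in the write-up is finite-dimensionality, which guarantees that $(\,\cdot\,)^{\perp\perp}$ is the identity and that the adjoint $\hat{P}^{\dagger}$ is well-defined without closure issues; both hold automatically in the setting $\mathcal{H}=\mathcal{H}_{d}^{\otimes k}$ of the paper.
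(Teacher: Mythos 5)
Your proof is correct, and it differs from the paper's in two ways worth noting. The paper proves only one inclusion, $\mathrm{ker}\,\hat{P}\subseteq(\mathrm{im}\,\hat{P}^{\dagger})^{\perp}$, and it does so using the projector hypothesis explicitly: for $\ket{\psi_l}\in\mathrm{im}\,\hat{P}^{\dagger}$ it writes $\braket{\psi_r|\psi_l}=\bra{\psi_r}\hat{P}^{\dagger}\ket{\psi_l}$, which relies on $\hat{P}^{\dagger}$ fixing its own image (i.e.\ on $\hat{P}^{\dagger 2}=\hat{P}^{\dagger}$); it then upgrades the inclusion to an equality by a dimension count, implicitly using $\mathrm{rank}\,\hat{P}=\mathrm{rank}\,\hat{P}^{\dagger}$. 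You instead prove both inclusions directly from the defining relation of the adjoint together with non-degeneracy of the inner product, and you correctly observe that the projector hypothesis is never needed --- the lemma is the standard statement $\mathrm{ker}\,A^{\dagger}=(\mathrm{im}\,A)^{\perp}$ valid for any operator on a finite-dimensional inner product space, with the projector structure entering only when the lemma is later applied (to get $\mathcal{H}=\mathrm{im}\,\hat{P}\oplus\mathrm{ker}\,\hat{P}$ and hence the criterion $\mathcal{H}=\mathcal{H}_R\oplus\mathcal{H}_L^{\perp}$). Your route buys generality and avoids the dimension count; the paper's route is marginally shorter on the forward inclusion because it exploits the projector identity it already has in hand. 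Both are complete and rigorous.
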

\begin{proof}
    For $\forall \ket{\psi_r} \in \mathrm{ker}\,P$ and $\forall \ket{\psi_l} \in \mathrm{im}\,P^{\dagger}$, we have
    \begin{align}
        \langle \psi_r|\psi_l \rangle
        = \bra{\psi_r} \hat{P}^{\dagger} \ket{\psi_l}
        = \bra{\psi_l} \hat{P} \ket{\psi_r}^{*} 
        = 0.
    \end{align}
    The first equation is a result of $\hat{P}^{\dagger}\ket{\psi_l}=\ket{\psi_l}$. 
    The second equation is obtained by reversing the order of terms with conjugation. 
    The final equation is given by $\hat{P}\ket{\psi_r}=0$.
    Therefore, $\mathrm{ker}\,\hat{P} \in \left(\mathrm{im}\,\hat{P}^{\dagger}\right)^{\perp}$.
    By counting dimensions we know that $\mathrm{ker}\,\hat{P}=\left(\mathrm{im}\,\hat{P}^{\dagger}\right)^{\perp}$.
    Similarly, one can prove that $\mathrm{ker}\,\hat{P}^{\dagger}=\left(\mathrm{im}\,\hat{P}\right)^{\perp}$.
\end{proof}

\begin{figure}[tbp]
    \centering
    \includegraphics[width=0.35\columnwidth]{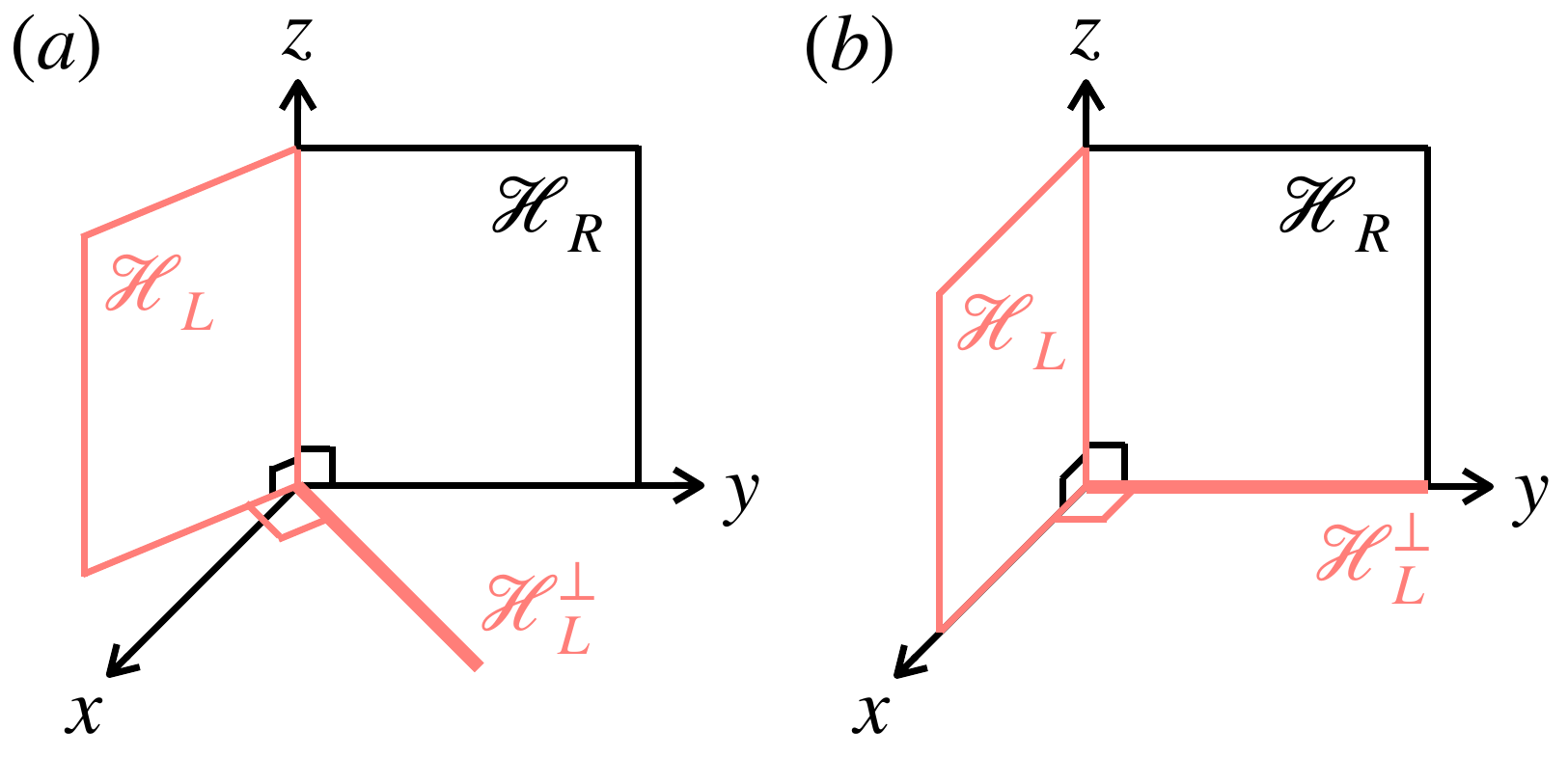}
    \caption{
    An illustrative example of bi-orthogonalization.
    The local Hilbert space $\mathcal{H}=\mathbb{R}^3$.
    The black plane represents $\mathcal{H}_R$ and the red plane represents $\mathcal{H}_L$, with its orthogonal complement $\mathcal{H}_L^{\perp}$ shown as the red line.
    (a) $\mathcal{H}_L$ at a $45^{\circ}$ angle from the $xz$-plane.
    (b) $\mathcal{H}_L$ is the $xz$-plane.
    }
    \label{ToyOrtho}
\end{figure}

In our situation, we expect $\hat{P}$ to satisfy $\mathrm{im}\,\hat{P}=\mathcal{H}_R$ and $\mathrm{im}\,\hat{P}^{\dagger}=\mathcal{H}_L$.
Therefore, by using the lemma, we obtain the following criterion for the existence of such a $\hat{P}$.
\begin{criteria}
    $\mathcal{H}_{R}\oplus \mathcal{H}_{L}^{\perp}$ or $\mathcal{H}_{L}\oplus \mathcal{H}_{R}^{\perp}$ spans the whole local Hilbert space $\mathcal{H}$.
\end{criteria}
\begin{proof}
    If the projector $\hat{P}$ exists, then $\mathrm{ker}\,\hat{P} = (\mathrm{im}\,\hat{P}^{\dagger})^{\perp} = \mathcal{H}_L^{\perp}$, thus $\mathcal{H} = \mathcal{H}_{R}\oplus \mathcal{H}_{L}^{\perp}$.

    From the other side, without loss of generality, we assume that $\mathcal{H} = \mathcal{H}_{R}\oplus \mathcal{H}_{L}^{\perp}$.
    One can construct such a projector $\hat{P}$ that projects onto $\mathcal{H}_R$ satisfying $\mathrm{ker}\,\hat{P} = \mathcal{H}_L^{\perp}$.
    According to the lemma, we have
    \begin{align}\begin{aligned}
        \mathrm{im}\,\hat{P}^{\dagger} &= (\mathrm{ker}\,\hat{P})^{\perp} = \mathcal{H}_L,\\
    \end{aligned}\end{align}
    Therefore, $\hat{P}$ is the desired projector.
\end{proof}

To give an illustrative example, we choose $\mathcal{H}=\mathbb{R}^3$ and $\mathcal{H}_R$ to be the $yz$-plane.
In the first case, $\mathcal{H}_L$ is set at a $45^{\circ}$ angle as shown in Fig.~\ref{ToyOrtho}(a).
It is clear that $\mathcal{H}_L^{\perp}$ and $\mathcal{H}_R$ can span the entire Hilbert space $\mathcal{H}$, therefore an nH-PH can be constructed.
In the second case, $\mathcal{H}_L$ is the $xz$-plane, as depicted in Fig.~\ref{ToyOrtho}(b).
An nH-PH cannot be constructed because the $y$-axis and the $yz$-plane do not span the entire $\mathcal{H}$, even though both $\mathcal{H}_L$ and $\mathcal{H}_R$ have dimension $2$.

\section{The transfer matrix and the metric matrix at the fixed point}\label{FixedPointTransferMatrix}
We analytically calculate the dominant eigenvector of the transfer matrix defined in Eq.~(9) in the main text, from which we construct the transfer matrix and the metric matrix at the fixed point under RG flow.
\begin{align}
    \mathbf{E}_{(\alpha \beta, \alpha' \beta')}^{\infty}(\mu>\frac{1}{3}) = 
    \frac{1}{2}\begin{pmatrix}
        0 \\ -1 \\ 1 \\ 0
    \end{pmatrix}
    \begin{pmatrix}
        0 & -1 & 1 & 0
    \end{pmatrix}
    = \frac{1}{2} \begin{pmatrix}
        0 & 0 & 0 & 0\\
        0 & 1 & -1 & 0\\
        0 & -1 & 1 & 0\\
        0 & 0 & 0 & 0
    \end{pmatrix}.
\end{align}
\begin{align}
    \mathbf{G}_{(\alpha \alpha', \beta \beta')}^{\infty}(\mu>\frac{1}{3}) = 
    \frac{1}{2}\begin{pmatrix}
        0 & 0 & 0 & 1\\
        0 & 0 & -1 & 0\\
        0 & -1 & 0 & 0\\
        1 & 0 & 0 & 0
    \end{pmatrix}.
\end{align}
\begin{align}
    \mathbf{E}_{(\alpha \beta, \alpha' \beta')}^{\infty}(\mu<\frac{1}{3}) = 
    \begin{pmatrix}
        1 \\ 0 \\ 0 \\ 0
    \end{pmatrix}
    \begin{pmatrix}
        1 & 0 & 0 & 0
    \end{pmatrix}
    = \begin{pmatrix}
        1 & 0 & 0 & 0\\
        0 & 0 & 0 & 0\\
        0 & 0 & 0 & 0\\
        0 & 0 & 0 & 0
    \end{pmatrix}.
\end{align}
\begin{align}
    \mathbf{G}_{(\alpha \alpha', \beta \beta')}^{\infty}(\mu<\frac{1}{3}) = 
    \begin{pmatrix}
        1 & 0 & 0 & 0\\
        0 & 0 & 0 & 0\\
        0 & 0 & 0 & 0\\
        0 & 0 & 0 & 0
    \end{pmatrix}.
\end{align}

\section{Calculation of string order}\label{StringOrderAppendix}
\begin{figure}[tbp]
    \centering
    \includegraphics[width=0.9\columnwidth]{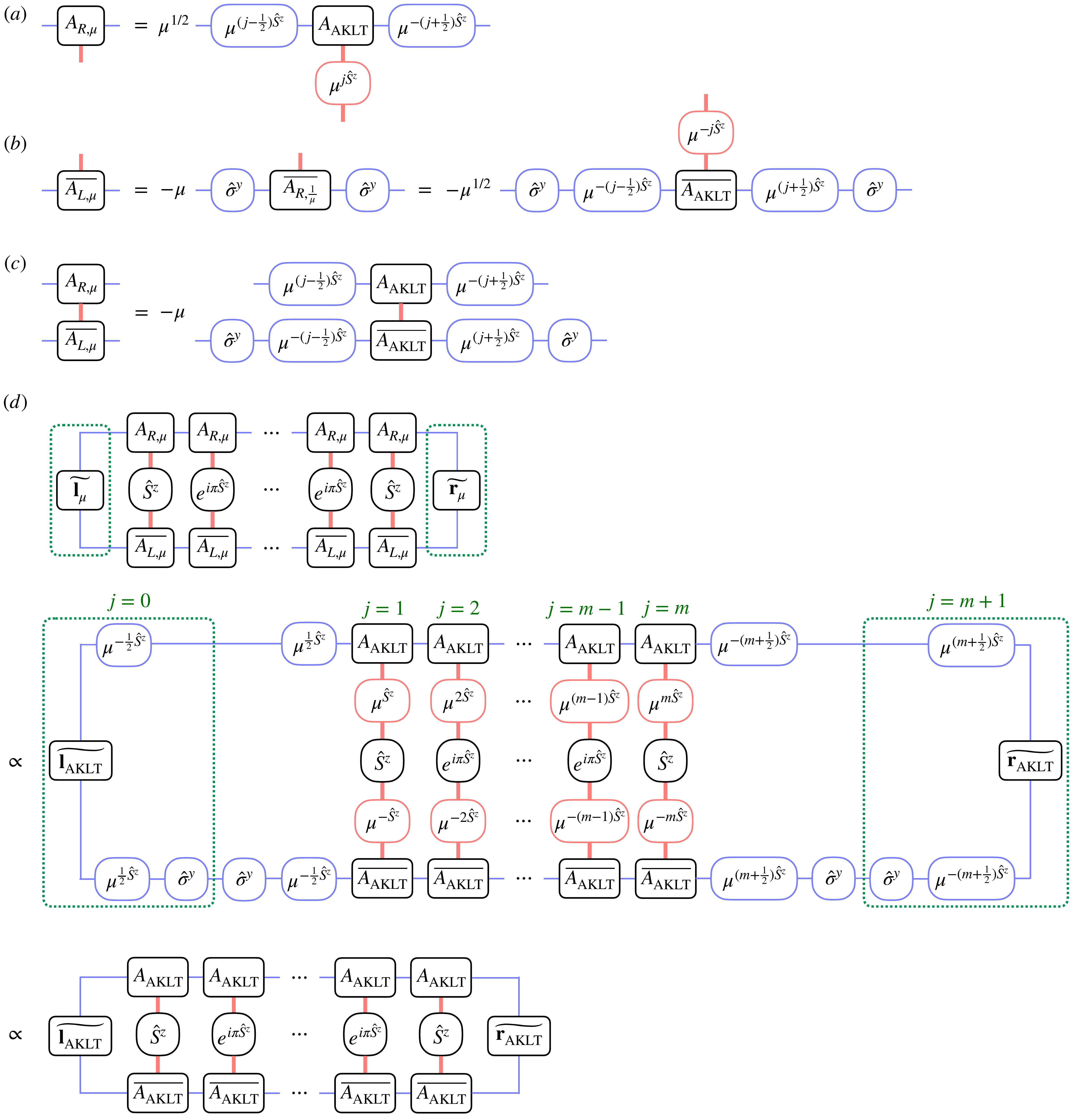}
    \caption{The expectation value of the string order parameter $\braket{\hat{O}_{\rm str}\left(i, j\right)}_{LR}$.
    $\hat{S}^z$ is the angular momentum operator defined on the Hilbert space it acts on.
    (a) The action of $\mu^{j \hat{S}_j^z}$ on $A_{\rm AKLT}$.
    (b) The action of $\mu^{-j \hat{S}_j^z}$ on $\overline{A_{\rm AKLT}}$.
    (c) The relation between transfer matrices.
    (d) $\braket{\hat{O}_{\rm str}\left(i, j\right)}_{LR}$ is the same as that of the AKLT state for $\mu\in(\frac{1}{3}, 3)$.
    }
    \label{StringOrder}
\end{figure}

It was previously shown in~\cite{Matsumoto2020} that $\ket{\Phi_{\mu}}\propto \hat{M}_{\mu} \ket{\Phi_{\rm AKLT}}$ for OBC, where the modification operator is defined as $\hat{M}_{\mu}=\prod_{j=1}^{N} \mu^{j \hat{S}_j^z}$.
However, this cannot be directly applied to the thermodynamic limit as the operator is site-dependent.
In Fig.~\ref{StringOrder}(a)(b), the action of this operator on the on-site tensor of the AKLT state is shown and related to the tensors $A_{\mu}$ and $A_{1/\mu}$ defined in Eq.~(8) in the main text.
In Fig.~\ref{StringOrder}(c), we show the transformation between their transfer matrices. 
From this relation, it can be seen that $\widetilde{\mathbf{l}_\mu}$ ($\widetilde{\mathbf{r}_\mu}$) is proportional to the tensors in the left (right) green dashed square in Fig.~\ref{StringOrder}(d).
Meanwhile, the string order parameter to be calculated $\hat{O}_{\rm str}\left(i, j\right)=\hat{S}_i^z\left(\prod_{k=i+1}^{j-1}e^{i\pi \hat{S}_k^z}\right)\hat{S}_j^z$ commutes with the modification operators $\hat{M}_{\mu}$ and $\hat{M}_{1/\mu}$ in the bulk.
As a consequence, $\braket{\hat{O}_{\rm str}\left(i, j\right)}_{LR}$ is the same as that of the AKLT state for $\mu\in(\frac{1}{3}, 3)$, which is given by $\braket{\hat{O}_{\rm str}\left(i, j\right)}_{LR} = -\frac{4}{9}$, originating from the fact that their fixed point transfer matrices have the same entanglement structure.
Therefore, our non-Hermitian system is in the same quantum phase as the conventional AKLT model in this region.
On the contrary, $\braket{\hat{O}_{\rm str}\left(i, j\right)}_{RR}$ also saturates to a non-zero value within small $m$ for all $\mu$, but the value becomes smaller as $\mu$ deviates away from the AKLT point $\mu = 1$, as shown in Fig.~\ref{Saturate}.

\begin{figure}[tbp]
    \centering
    \includegraphics[width=0.4\linewidth]{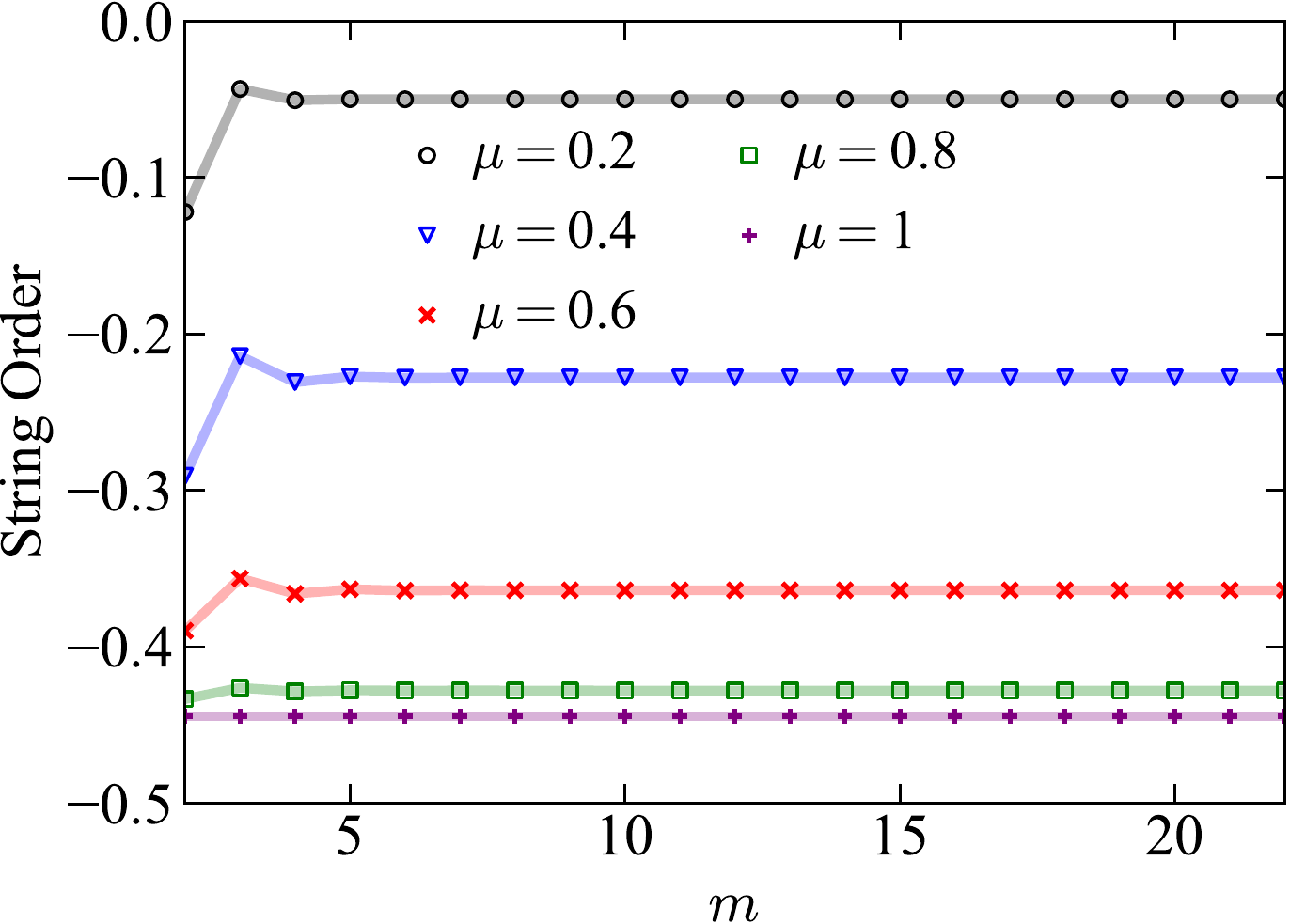}
    \caption{$\braket{O_{\rm str}}_{RR}$ saturate to a non-zero value for all $\mu$, which becomes smaller when away from $\mu=1$.
    }
    \label{Saturate}
\end{figure}
\begin{figure}[tbp]
    \centering
    \includegraphics[width=0.7\linewidth]{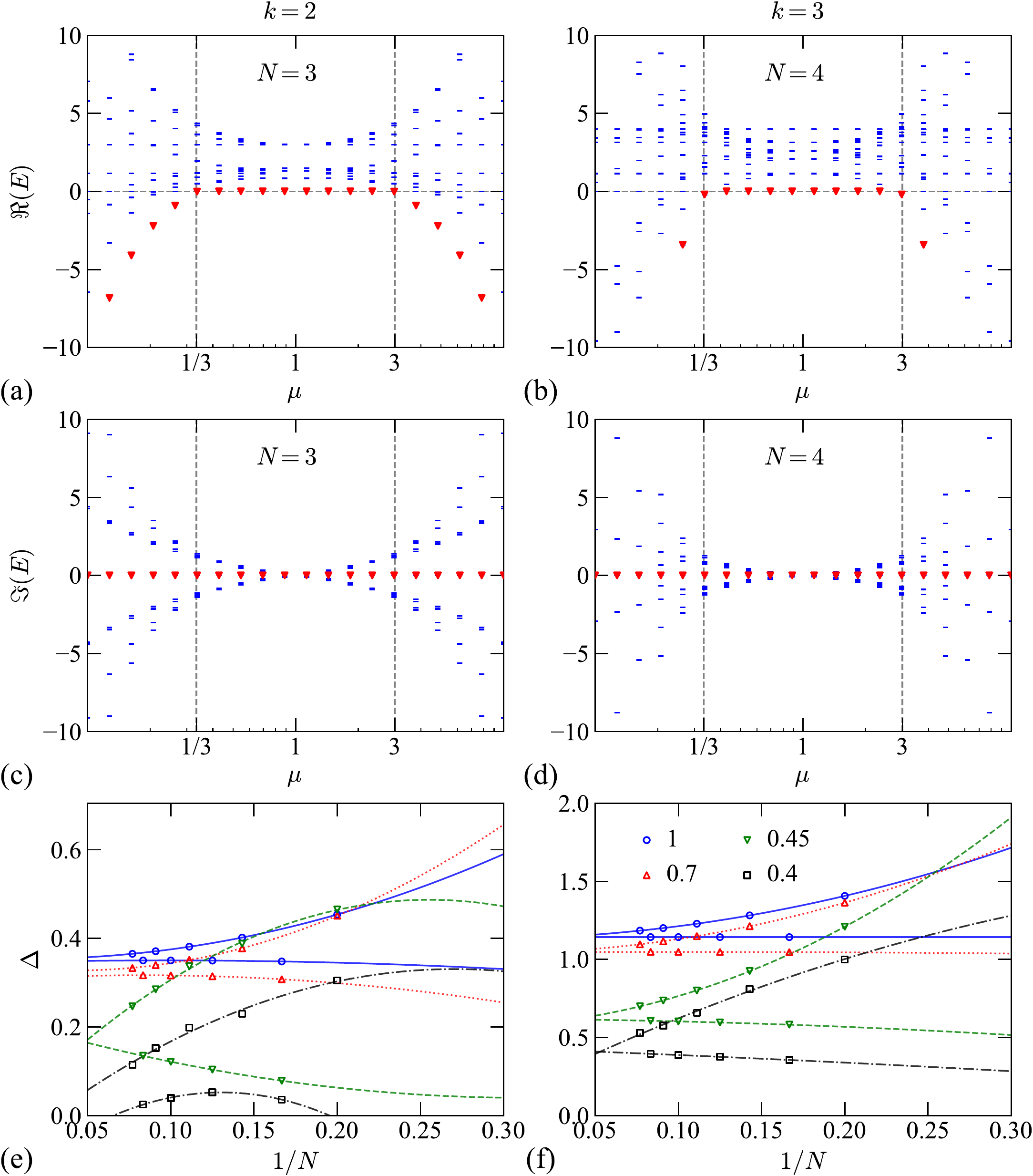}
    \caption{Energy spectrum for finite-size systems under PBC.
        (a)(c) The real and imaginary parts of the spectrum of nH-PH for $k=2$, $N=3$ under PBC.
        Red triangles represent the ground states.
        (b)(d) The same for $k=3$, $N=4$ under PBC.
        (e-f) Finite-size scaling with quadratic functions.
    }
    \label{EDSpectrum}
\end{figure}

\section{Energy spectrum of $\hat{H}_k(\mu)$ for finite-size systems under OBC and PBC}
The entire energy spectrum is independent of $\mu$ under OBC.
In the following, we prove that $\hat{H}_k(\mu)$ and $\hat{H}_k(\mu = 1)$ are related by a similar transformation given by the modification operator $\hat{M}_{\mu}=\prod_{j=1}^{N} \mu^{j \hat{S}_j^z}$, i.e., 
\begin{equation}\begin{aligned}
    \hat{\Pi}_i^{\prime}(\mu) 
    &\equiv \hat{M}_{\mu} \hat{\Pi}_i(\mu = 1) \hat{M}_{\mu}^{-1}\\
    &= \left(\mu^{i \hat{S}_i^z} \otimes \mu^{(i+1) \hat{S}_{i+1}^z}\otimes \cdots\otimes\mu^{(i+k-1) \hat{S}_{i+k-1}^z} \right) \hat{\Pi}_i(\mu = 1) \left(\mu^{-i \hat{S}_i^z} \otimes \mu^{-(i+1) \hat{S}_{i+1}^z}\otimes \cdots\otimes\mu^{-(i+k-1) \hat{S}_{i+k-1}^z}\right) 
\end{aligned}\end{equation}
which is also a projector since $\hat{\Pi}_i^{'2}(\mu)=\hat{\Pi}_i^{'}(\mu)$.
From Fig.~\ref{StringOrder}(a)(b), it can be verified that $\hat{\Pi}_i^{\prime}(\mu) \hat{T}_R(\mu) = \hat{T}_L^{\dagger}(\mu)\hat{\Pi}_i^{\prime}(\mu)= 0$.
Therefore, $\hat{\Pi}_i^{'}(\mu)$ must be the same as $\hat{\Pi}_i(\mu)$ since they project onto the same local Hilbert space $\mathcal{H}_R$, which fully determines the projector.
As a result, the whole Hamiltonian $\hat{H}_{{\rm OBC}}(\mu)$ is related to the AKLT model via a similar transformation $\hat{M}_{\mu}$, so they have the same spectrum.

The spectrum under PBC in Fig.~\ref{EDSpectrum} indicates that $H_k(\mu)$ fails to have $\ket{\Phi_{\mu}}$ as its ground state for $\mu<\frac{1}{3}$ and $\mu>3$, where the ground state energy has a negative real part.
For $N=3,k=2$, the ground state is always unique, while for $N=4,k=3$, the ground state is doubly degenerate in these regions.
It is also worth noting that, due to the finite-size effect, systems at some $\mu$ may be gapped even if they turn out to be gapless in the thermodynamic limit.
Therefore, as shown in Fig.~\ref{EDSpectrum}(e)(f), we use quadratic functions to implement the finite-size scaling to $N\rightarrow\infty$.
For $N=3,k=2$, the systems are gapped at $\mu=1,0.7,0.45$ while gapless at $\mu=0.4$.
For $N=4,k=3$, the systems have a finite gap at all $\mu=1,0.7,0.45,0.4$.
In addition, the gapped region becomes larger as $k$ increases.
These results are consistent with those in Fig.~3 in the main text.

\section{Analytical construction of parent Hamiltonian for $k = 2$}
The spin operators for spin-$1$ are given by
\begin{gather}
\hat{S}^x = \frac{1}{\sqrt{2}}\left(
\begin{array}{ccc}
 0 & 1 & 0 \\
 1 & 0 & 1 \\
 0 & 1 & 0 \\
\end{array}
\right),\,
\hat{S}^y = \frac{1}{\sqrt{2}} \left(
\begin{array}{ccc}
 0 & -i & 0 \\
 i & 0 & -i  \\
 0 & i & 0 \\
\end{array}
\right),\,
\hat{S}^z = \left(
\begin{array}{ccc}
 1 & 0 & 0 \\
 0 & 0 & 0 \\
 0 & 0 & -1 \\
\end{array}
\right),\\
\hat{S}^+ = \hat{S}^x+i\hat{S}^y
= \left(
\begin{array}{ccc}
 0 & \sqrt{2} & 0 \\
 0 & 0 & \sqrt{2} \\
 0 & 0 & 0 \\
\end{array}
\right),\,
\hat{S}^- = \hat{S}^x-i\hat{S}^y
= \left(
\begin{array}{ccc}
 0 & 0 & 0 \\
 \sqrt{2} & 0 & 0 \\
 0 & \sqrt{2} & 0 \\
\end{array}
\right).
\end{gather}
We choose the following set of basis to express the local projector
\begin{align}
    \begin{gathered}
        \hat\lambda_1=\left(\hat{S}^x+\hat{S}^{x z}\right) / 2 \\
        \hat\lambda_2=\left(\hat{S}^x-\hat{S}^{x z}\right) / 2 \\
        \hat\lambda_3=\left(\hat{S}^y+\hat{S}^{y z}\right) / 2, \\
        \hat\lambda_4=\left(\hat{S}^y-\hat{S}^{y z}\right) / 2, \\
        \hat\lambda_5=\hat{S}^{x y} / \sqrt{2}, \\
        \hat\lambda_6=\left(\hat{S}^z+3 \hat{S}^{z 2}\right) / 2\sqrt{2}-\hat{I}/\sqrt{2}, \\
        \hat\lambda_7=(\hat{S}^{x 2}-\hat{S}^{y 2})/\sqrt{2}, \\
        \hat\lambda_8=\left(3 \hat{S}^z-3 \hat{S}^{z 2}+2 \hat{I}\right) / 2 \sqrt{6}, \\
        \hat\lambda_9=\sqrt{\frac{1}{3}} \hat{I}=\frac{1}{2\sqrt{3}}\left(\hat{S}^{x 2}+\hat{S}^{y 2}+\hat{S}^{z 2}\right),
        \end{gathered}
\end{align}
satisfying that $\Tr{[\hat\lambda_i^{\dagger}\hat\lambda_j]}=\delta_{ij}$, where $\hat{S}^{mn}=\hat{S}^m \hat{S}^n+\hat{S}^n\hat{S}^m$.
Under the basis of $\{\hat\lambda_m\}\otimes \{\hat\lambda_n\}$ (i.e., $\hat{O} = \sum_{m,n} \mathbf{O}_{mn}\hat\lambda_m\otimes\hat\lambda_n$), the two-site projector is written as
\begin{align}
    \mathbf{\Pi}(\mu) = 
    \left(\begin{array}{cccc}
    \frac{{\mu}^2+1}{4\mu} & \frac{{\mu}^2+1}{6\mu} & \frac{{\mu}^2-1}{4\mu}i & \frac{{\mu}^2-1}{6\mu}i\\
    \frac{{\mu}^2+1}{6\mu} & \frac{{\mu}^2+1}{4\mu} & \frac{{\mu}^2-1}{6\mu}i & \frac{{\mu}^2-1}{4\mu}i\\
    -\frac{{\mu}^2-1}{4\mu}i & -\frac{{\mu}^2-1}{6\mu}i & \frac{{\mu}^2+1}{4\mu} & \frac{{\mu}^2+1}{6\mu}\\
    -\frac{{\mu}^2-1}{6\mu}i & -\frac{{\mu}^2-1}{4\mu}i & \frac{{\mu}^2+1}{6\mu} & \frac{{\mu}^2+1}{4\mu}\\
    \end{array}
    \right)\oplus
    \left(
    \begin{array}{cccc}
    \frac{{\mu}^4+1}{12{\mu}^2}& 0 & -\frac{{\mu}^4-1}{12{\mu}^2}i & 0\\
    0 & \frac{1}{3} & 0 & \frac{\sqrt{3}}{6}\\
    \frac{{\mu}^4-1}{12{\mu}^2}i & 0 & \frac{{\mu}^4+1}{12{\mu}^2} & 0\\
    0 & \frac{\sqrt{3}}{6} & 0 & \frac{2}{3}
    \end{array}
    \right)\oplus
    \left(\frac{5}{3}\right),
\end{align}
from which we can explicitly construct $\hat{\Pi}$ by local spin operators
\begin{align}
    \begin{aligned}
    \hat{\Pi}(\mu)_{i} 
    &= \frac{5}{12}\left(\frac{\mu}{2} \hat{S}_{i}^{-}\hat{S}_{i+1}^{+} + \frac{1}{2\mu}\hat{S}_{i}^{+}\hat{S}_{i+1}^{-} + \hat{S}_{i}^{z}\hat{S}_{i+1}^{z}\right)
    + \frac{1}{6}\left(\frac{\mu^2}{4}{\hat{S}_{i}^{-2}}{\hat{S}_{i+1}^{+2}}+\frac{1}{4\mu^2}{\hat{S}_{i}^{+2}}{\hat{S}_{i+1}^{-2}}-{\hat{S}_{i}^{z2}}-{\hat{S}_{i+1}^{z2}}\right)\\
    &+ \frac{1}{24}\left(\mu \hat{S}_{i}^{-z}\hat{S}_{i+1}^{+z} + \frac{1}{\mu}\hat{S}_{i}^{+z}\hat{S}_{i+1}^{-z}\right)
    +\frac{1}{4}{\hat{S}_{i}^{z2}}{\hat{S}_{i+1}^{z2}}+\frac{2}{3},
    \end{aligned}
\end{align}
which is just Eq.~(12) in the main text.

\section{Multi-site iTEBD}
Since our local Hamiltonian contains multi-site interactions in general, an algorithm that can handle long-range interactions is needed. 
Here we generalize the modified iTEBD algorithm~\cite{Hastings2009}, which is shown in Fig.~\ref{ModifiediTEBD}.
For a $k$-local Hamiltonian, we consider a $k$-site translation-invariant state with $k$ on-site tensors $T_i$ and $k$ Schmidt weights $G_i$, which are contracted in pairs $X_i = T_iG_i$, as shown in Fig.~\ref{ModifiediTEBD}(a).
We start with the SVD decomposition on $G_kY_k = U_1 G_1^{\prime}Y_{k-1}$ and update $X_1^{\prime}$ by contracting $Y_k$ and $Y_{k-1}^{*}$, which is equivalent to $G_k^{-1}U_1G_1^{\prime}$ but enable us to be free from calculating the inverse of $G_k$, as shown in Fig.~\ref{ModifiediTEBD}(b)(c).
We can do the truncation procedure iteratively until we reach $Y_1$ and end with updating $X_k^{'}=Y_1$.

\begin{figure}[tbp]
    \centering
    \includegraphics[width=1.0\textwidth]{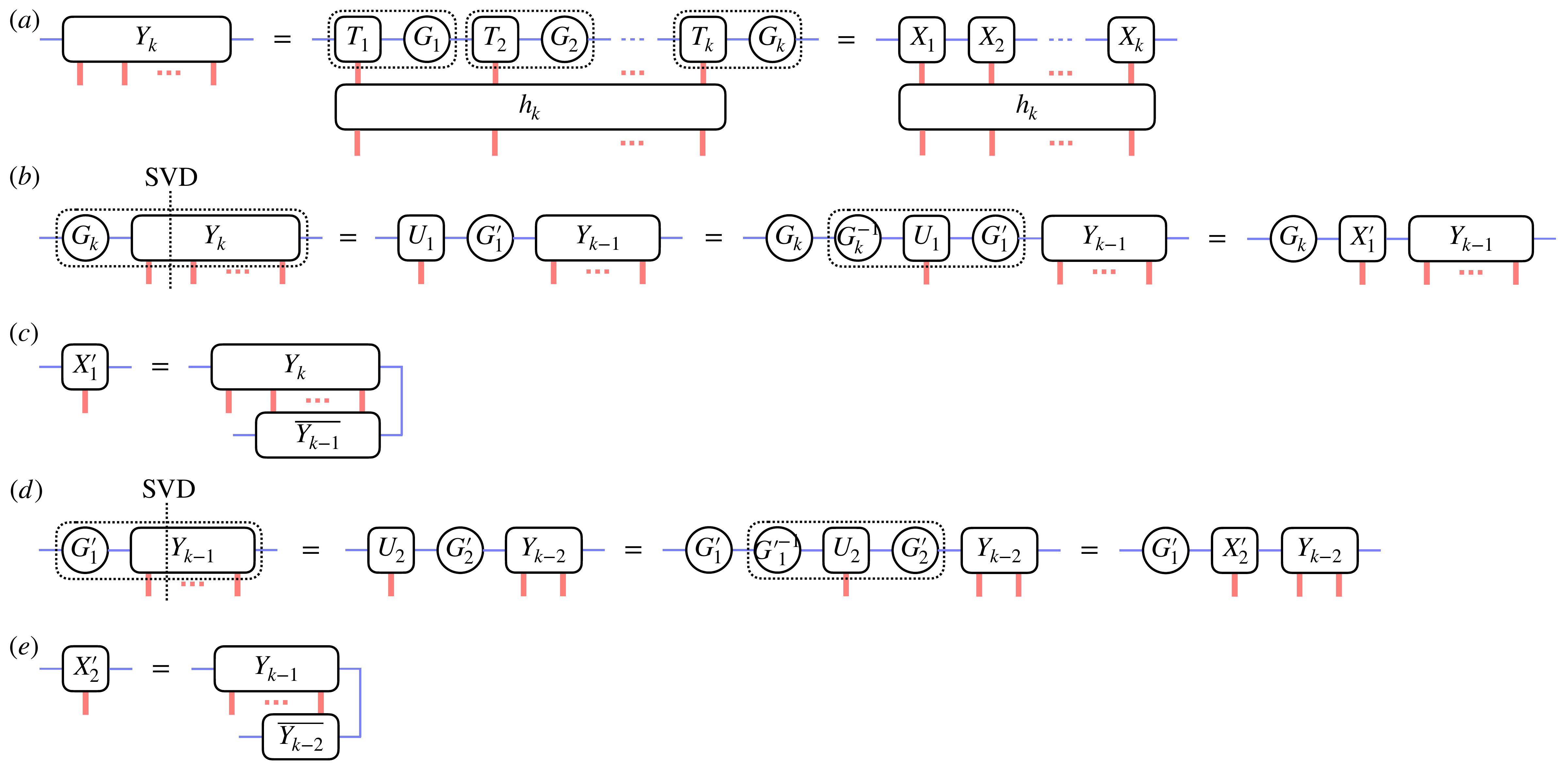}
    \caption{Truncation strategy in the Multi-site iTEBD method.
    (a) Contract $X_i = T_iG_i$.
    (b) Perform SVD decomposition on $G_kY_k = U_1 G_1^{\prime}Y_{k-1}$.
    (c) Update $X_1^{\prime}$ by contracting $Y_k$ and $Y_{k-1}^{*}$.
    (d-e) Iteratively do the same procedure until $Y_1$.}
    \label{ModifiediTEBD}
\end{figure}

We also apply our multi-site iTEBD method to $H_k^{\dagger}$.
The results are shown in Fig.~\ref{iTEBDLeft}, consistent with that in Fig.~3 in the main text.
In numerical simulations for both $H_k(\mu)$ and $H_k(\mu)^{\dagger}$, we adopt the bond dimension $D = 12$, the time step $\Delta\tau=5\times10^{-3}$, and the convergence criterion $e=1\times 10^{-14}$, defined as $e=\sum_{i=1}^{k} \sum_{j=1}^{D}\left[s_{ij}(\tau + \delta\tau) - s_{ij}(\tau)\right]^2$ with $s_{ij}$ being the $j$-th Schmidt weight for site $i$ in the unit cell.

\begin{figure}[tbp]
    \centering
    \includegraphics[width=0.55\columnwidth]{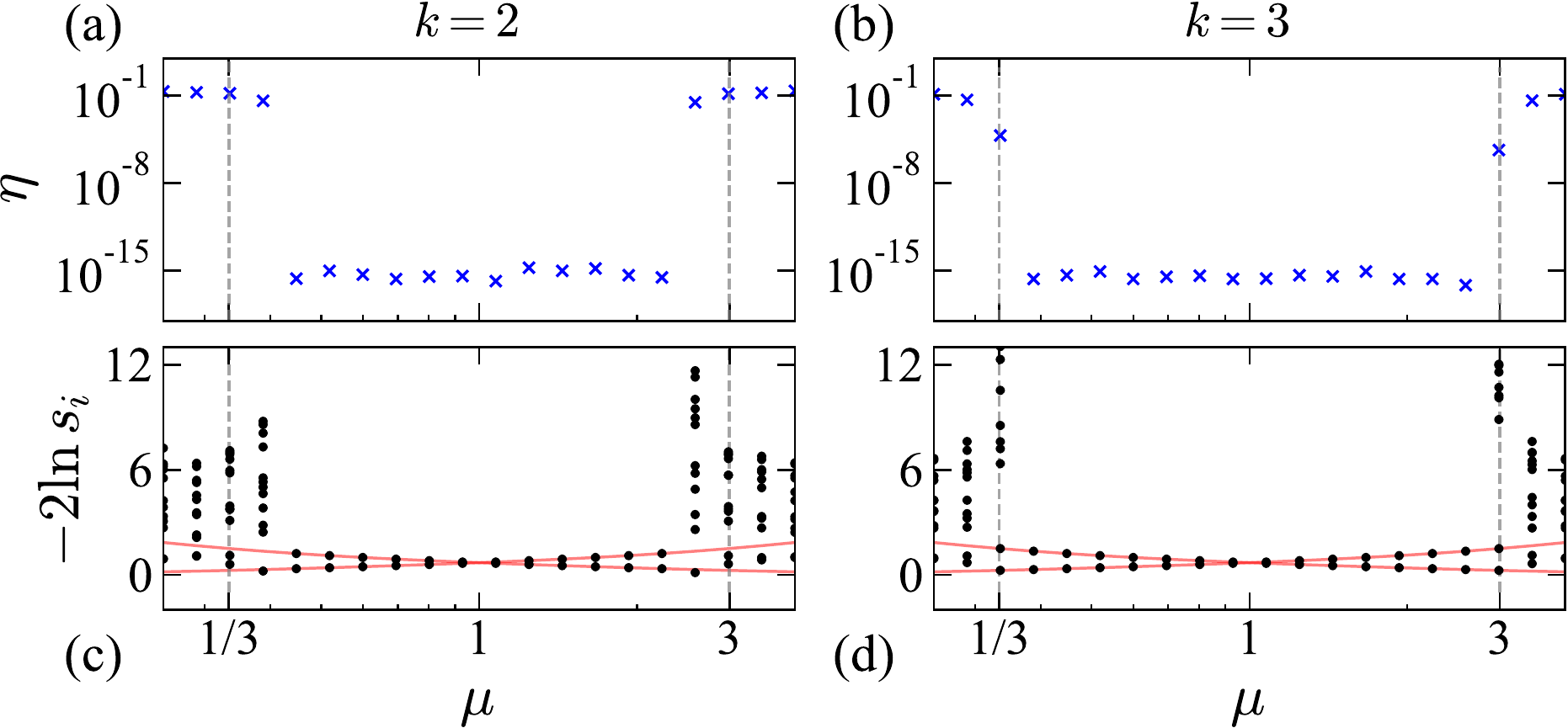}
    \caption{Numerical results for the ground state $\ket{\Psi^{\prime}_{\mu}}$ of $H_k(\mu)^{\dagger}$ calculated from the multi-site iTEBD method with $D = 12$ for $k = 2$ and $k = 3$.
    (a-b) Infidelity between $\ket{\Psi^{\prime}_{\mu}}$ and $\ket{\Phi_{1/\mu}}$.
    (c-d) Entanglement spectrum of $\ket{\Psi^{\prime}_{\mu}}$ (black dots) and $\ket{\Phi_{1/\mu}}$ (red lines).
    }
    \label{iTEBDLeft}
\end{figure}
\end{document}